\newtheorem{theorem}{Theorem}[section]
\newtheorem{corollary}{Corollary}[section]
\newtheorem{lemma}[theorem]{Lemma}
\newcommand{\diag}{\text{diag}}
\def\BibTeX{{\rm B\kern-.05em{\sc i\kern-.025em b}\kern-.08em
    T\kern-.1667em\lower.7ex\hbox{E}\kern-.125emX}}
\begin{document}

\title{Active risk aversion in SIS epidemics on networks\\
\author{Anastasia~Bizyaeva*, Marcela~Ordorica~Arango*, Yunxiu~Zhou,
        Simon~Levin, and~Naomi~Ehrich~Leonard}
\thanks{This research was supported in part by ARO grant W911NF-18-1-0325 and AFOSR grant FA9550-24-1-0002.}
\thanks{* Signifies authors who contributed equally.}
\thanks{A. Bizyaeva, NSF AI Institute in Dynamic Systems and Dept. of Mechanical Engineering at the University of Washington; anabiz@uw.edu}
\thanks{M. Ordorica Arango, N.E. Leonard, Dept.
of Mechanical and Aerospace Engineering, Princeton University;  \{mo5674,naomi\}@princeton.edu.}
\thanks{Y. Zhou was with Dept. of Operations Research and Financial Engineering, Princeton University;
zhouyx1122@gmail.com
}
\thanks{S. Levin, Dept. of Ecology and Evolutionary Biology, Princeton University; slevin@princeton.edu}}

\maketitle

\begin{abstract} 
We present and analyze an actively controlled SIS (actSIS) model of interconnected populations to study how risk aversion strategies affect network epidemics. A population using a risk aversion strategy reduces its contact rate with other populations when it perceives an increase in infection risk. The network actSIS model relies on two distinct networks. One is a physical 
network that defines which populations come into contact with which others, thus how infection spreads. The other is a communication network, such as an online social network, that defines which populations observe the infection level of which others,  thus how information spreads. 
We prove that the system exhibits a transcritical bifurcation where an endemic equilibrium (EE) emerges. 
For regular graphs, we prove  that 
the endemic infection level is uniform across populations and reduced by the risk aversion strategy, relative to the network SIS endemic level. 
We show that when communication is sufficiently sparse, this initially stable EE loses stability in a secondary bifurcation. Simulations show that a new stable solution emerges with nonuniform infection levels. 


\end{abstract}

\section{Introduction}

Infectious diseases are a persistent public health challenge. Compartmental models like SI, SIS, SIR, and SIRI offer mathematically tractable frameworks for understanding the spread of infection in populations - see \cite{mei2017dynamics,zino2021analysis} for recent surveys. These models are useful in the design and evaluation of public health strategies for the containment and mitigation of infectious disease, such as the use of face masks in public, social distancing, partial lockdowns, and public vaccination strategies. The tractability of compartmental models makes them a compelling tool for analysis. However, they often do not account for 
human interactions and attitudes in the face of a disease, which 
can affect the spread of pathogens \cite{yang2022sociocultural}.
Various recent extensions to compartmental models on networks have been proposed to address these limitations. Typically, these extensions consider a second dynamic process that evolves alongside the infection spread, and in turn affects the infection spread dynamics \cite{chen2020networked}. 
Several recent works study epidemics coupled with game-theoretic updates to self-protective strategies within populations, including factors such as fatigue and various economic and social costs \cite{ye2021game,frieswijk2022mean,maitra2023sis,satapathi2023coupled,fenichel2011adaptive,morin2015social}. Other works 
couple epidemics and opinion dynamics \cite{xuan2020network,she2022networked}, examine the effects of population movements and the spatial dynamics of epidemics \cite{viboud2006synchrony,grenfell2001travelling}, and consider multi-layer networks \cite{pare2022multilayer,abhishek2023sis}.

In this paper, we add to this literature and study SIS epidemic dynamics for networked populations that update their inter-population contact rate in response to a dynamic assessment of infection risk based on their 
observations of infection level in other populations, e.g., from a social network.  We focus on populations of \textit{risk averters} that lower their contact rate with increased perceived infection risk.
The network SIS model of epidemics 
with state-dependent contact rates 
was recently studied in \cite{wang2021suppressing,wang2022state,walsh2023decentralised}. These works formulate distributed control laws to reduce endemic infection levels, with each population adjusting its contact rate in response only to \textit{its own} infection level. 
In contrast, we focus  on the impact of \textit{risk perception} on infection levels when populations can observe infection levels of \textit{other} populations. 

We propose and study \textit{network actSIS} (actively controlled Susceptible-Infected-Susceptible), a reactive model for networked populations.  The model relies on two networks: 1) a \textit{contact network} that defines which populations come into contact with which other populations and thus how infection spreads and 2) a \textit{communication network} that defines which populations observe the infection level of which other populations and thus how information spreads. 
We examine the effect of risk aversion  and the two network structures on epidemic spread. A notion of risk perception was studied in  epidemics in \cite{ye2021game,frieswijk2022mean}; however, the analysis uses 
a mean-field limit, which does not reveal the role of network structure.

Our main contributions are as follows. First, we extend actSIS, a well-mixed population model of actively controlled SIS epidemics, introduced in \cite{zhou2020active}, to network actSIS, a model of interconnected populations, with a contact network that governs infection spread and a communication network that governs risk assessment. Second, for this model we prove the emergence of an endemic equilibrium (EE) in a transcritical bifurcation. For regular contact and communication network graphs, we prove that risk aversion  lowers  the (uniform) infection levels at this 
EE. Third, we show that an EE in this model is not necessarily unique; it can lose stability in a secondary bifurcation if the communication network is sufficiently sparse.  Stability loss leads to emergence of an infection state that is heterogeneous even for regular  graphs, i.e., nonuniform infection levels, despite a high level of homogeneity in  model structure and parameters.

\section{Background\label{sec:background}}

\subsection{Notation and preliminaries}

Given a vector $\mathbf{x}\in \mathbb{R}^n$, $\diag(\mathbf{x})$ is the matrix with diagonal entries $x_i$, and zero off-diagonal entries. Given two vectors $\mathbf{x}, \mathbf{y} \in \mathbb{R}^n$, we say $\mathbf{x}\succ \mathbf{y}$ if $x_j > y_j$, and $x\succeq y$ if $x_j\geq y_j$  for all $j\in \{1,\ldots,n\}$, with similar element-wise definitions for matrices.
We denote by $\textbf{0}$ and $\textbf{1}$ the vectors of appropriate dimension whose entries are all equal to $0$ and $1$, respectively. A matrix is \textit{reducible} if it is similar to an upper-triangular matrix, and it is \textit{irreducible} if it is not reducible. A square matrix $A$ is \textit{Metzler} if 
$a_{ij}\geq 0$ for all $i\neq j$. Given a square matrix $A$, let $\{\lambda_i\}$ be the set of eigenvalues of $A$, the \textit{spectral radius} of $A$ is $\rho(A)=\max\{|\lambda_i| : i \in \{1,\ldots,n\}\}$. A \textit{leading eigenvalue} of $A$ is $\lambda_{max}(A) = \operatorname{argmax} \{ \operatorname{Re}(\lambda_i) \}$. 

$A$ is Hurwitz if $\operatorname{Re}(\lambda_{max}(A)) < 0$.
If $A$ is a Metzler, irreducible matrix, then $\rho(A) > 0$ is a simple leading eigenvalue of $A$ with associated left and right eigenvectors $\mathbf{w}\succ \mathbf{0}$, $\mathbf{v} \succ \mathbf{0}$. For nonnegative $A$ this is the classic Perron-Frobenius property and $\mathbf{w},\mathbf{v}$ are the Perron-Frobenius eigenvectors. The decomposition $A = T + U$ for a Metzler matrix $A$ is a \textit{regular splitting} if $U$ is Hurwitz and Metzler, and $T \succeq 0$. For a Metzler $A$ with regular splitting $T+U$, $\lambda_{max}(A) > 0 (= 0)$ if and only if $\rho(T U^{-1}) > 0 (= 0) $ \cite{pagliara2020adaptive}.

A \textit{graph} $\mathcal{G} = (\mathcal{V},\mathcal{E})$ is a collection of $n$ nodes $\mathcal{V}$ and edges $\mathcal{E}$. Node $j \in \mathcal{V}$ is a \textit{neighbor} of node $i \in \mathcal{V}$ on the graph $\mathcal{G}$ whenever the edge $(i,j) \in \mathcal{E}$. 
The adjacency matrix $A = (a_{ij})$ encodes relationships on graph $\mathcal{G}$, with $a_{ij} = 1$ whenever $(i,j) \in \mathcal{E}$ and $0$ otherwise. 
A graph is \textit{undirected} whenever $(i,j) \in \mathcal{E}$ if and only if $(j,i) \in \mathcal{E}$, i.e. $A = A^T$; it is \textit{directed} otherwise.
The in-degree of a node $j\in \mathcal{V}$ is $d_j = \sum_{k=1}^N a_{jk}$. A graph is \textit{regular} with in-degree $d$ if $d_j = d$ for all $j \in \mathcal{V}$.
A graph is \textit{strongly connected} if, for all pairs of vertices $i,j$, there is a path from $i$ to $j$. 
A graph $\mathcal{G}$ is strongly connected if and only if $A$ is irreducible. 
We  reference graphs by their associated adjacency matrix.

\subsection{SIS and actSIS models}

The Susceptible-Infected-Susceptible (SIS) model describes the spread of infection in a population of agents as
\begin{equation}
    \dot{p}=\beta (1-p)p-\delta p, \label{eq:SIS}
\end{equation}
where $p \in [0,1]$ is the proportion of the population that is currently infected, $\delta > 0$ is the recovery rate, and $\beta > 0 $ is the rate of infection transmission. The network SIS model generalizes the traditional SIS model by incorporating structured contact patterns among a set of populations through interactions over a contact network. 
Each node in the contact network graph $A$ signifies a separate population, and edges encode whether these populations come into contact. The infection level $p_j \in [0,1]$ in population $j$ evolves as
\begin{equation}
   \textstyle \dot p_j = \left(1-p_{j}\right) \beta \sum_{k=1}^{N} a_{j k} p_{k} -\delta p_{j}.\label{eq:NetworkSIS}
\end{equation}
$\beta$ and $\delta$ are infection and recovery rates. $a_{jk} = a_{kj} = 1$ if populations $j$ and $k$ comes into contact 
and $0$ otherwise.

While traditional SIS models consider 
$\beta$ to be a fixed parameter, in reality, populations can dynamically adapt their contact behavior, for example by practicing social distancing 
when infection levels rise. To account for this, the actively controlled SIS (actSIS) model 
\cite{zhou2020active} extends the traditional SIS framework by incorporating adaptive contact rates based on perceived infection risk. The model features two variables: the actual infection rate $p \in [0,1]$, and a filtered observation $p_s \in [0,1]$ that represents the \textit{perceived} infection level on which the population bases its risk assessment. The effective infection rate can be expressed as $\beta(p_s) = \bar\beta \alpha(p_s)$, where $\bar \beta$ is the intrinsic transmission rate of the disease and $\alpha: [0,1]\to [0,1]$ models the contact rate which actively adapts to perceived risk. The variables co-evolve according to 
\begin{equation}
    \dot{p}=\bar\beta \alpha(p_s)(1-p)p-\delta p, \ \ \ \  \tau_s \dot{p}_s=-p_s+p, \label{eq:actSIS_2d}
\end{equation}
where $\tau_s$ is a time constant that captures a potential delay in the transmission of information about infection levels.

\section{Network actSIS Model \label{sec:model}}
To study the effect of adaptive contact rates on the infection levels in interconnected populations, we extend the actSIS model from its original well-mixed statement \eqref{eq:actSIS_2d} to network actSIS, which incorporates network interactions. Analogously to the classic network SIS model, for the network actSIS model, we let $p_j\in[0,1]$ be the infection level of population $j$, and define a \textit{contact graph} with adjacency matrix $A = (a_{ij})$. We introduce $p_{sj} \in [0,1]$ as the perceived global infection level within population $j$, and we define a \textit{communication graph} with adjacency matrix $\hat{A} = (\hat{a}_{ij})$. The contact graph is associated with the spread of infection, whereas the communication graph encodes which populations influence the perception of risk of which others.  The two graphs can be the same, or may differ, for example with $A$ representing physical contact between populations and $\hat{A}$ representing interactions in an online social network.

For each population $j$, the effective contact rate with neighboring populations is $\beta_j(p_s) = \bar{\beta} \alpha_j(p_s)$ where $\Bar{\beta}$ is the intrinsic transmission rate and $\alpha_j:[0,1] \to [0,1]$ models the response to risk of population $j$. This formulation allows for potentially heterogeneous risk response strategies on the network. We assume each $\alpha_j$ is continuously differentiable on the unit interval (i.e. $C^1$), $|\alpha_j'(p_s)| < \infty$ for any $p_s \in [0,1]$, and $\alpha_j(0) \neq 0$ for all $j \in \mathcal{V}$. The infection level within population $j$ and its filtered observation of infection among its neighbors co-evolve according to 
\begin{align} 
  \textstyle\dot p_j&\textstyle= \left(1-p_{j}\right)\bar{\beta} \alpha_j(p_{sj}) \sum_{k=1}^{N} a_{j k} p_{k} -\delta_j p_{j},  \label{eq:pdot} \\
  \textstyle \tau_s \dot{ p}_{sj} & \textstyle = - p_{sj} + \frac{1}{\hat{d}_j} \left( \sum_{k=1}^N \hat{a}_{jk} p_k \right) \label{eq:phat}
\end{align}
where $\hat{d}_j =:\sum_{k=1}^N \hat{a}_{jk}$ is the in-degree of node $j$ on the communication graph  $\hat{A}$.

\section{Theoretical results for general model \label{sec:bif}}


In this section, we present an analysis of the network actSIS model \eqref{eq:pdot}, \eqref{eq:phat}. First, we establish that the model is well-posed, in the sense that the infection rate $p_j$ and perceived global infection rate $p_{sj}$ for each population $j \in \mathcal{V}$ remain within the interpretable set $[0,1] \subset \mathbb{R}$.  

\begin{theorem} [Well-Definedness]
The set $\mathcal{S} = [0,1]^{2N}$ is forward invariant for the network actSIS dynamics \eqref{eq:pdot},\eqref{eq:phat}. \label{thm:welldef}
\end{theorem}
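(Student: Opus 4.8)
The plan is to establish forward invariance through a boundary analysis of the vector field, invoking Nagumo's tangency theorem. Since $\mathcal{S} = [0,1]^{2N}$ is a compact, convex set and the right-hand sides of \eqref{eq:pdot},\eqref{eq:phat} are continuous (indeed locally Lipschitz, given that each $\alpha_j$ is $C^1$ with bounded derivative), solutions exist and are locally unique, so Nagumo's criterion applies. It then suffices to show that at every boundary point $\mathbf{x} \in \partial\mathcal{S}$ the vector field lies in the tangent cone $T_{\mathcal{S}}(\mathbf{x})$. Because $\mathcal{S}$ is a product of intervals, this condition decouples coordinatewise: wherever a constraint $x_i = 0$ is active we need the $i$-th component of the field to be $\geq 0$, and wherever $x_i = 1$ is active we need it $\leq 0$. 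I would therefore reduce the whole statement to verifying these sign conditions on each of the four types of active face.

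For the four cases I would argue as follows. First, consider a $p_j$-coordinate with all remaining coordinates still in $[0,1]$. When $p_j = 0$ the recovery term vanishes and $\dot p_j = \bar\beta\,\alpha_j(p_{sj})\sum_{k} a_{jk} p_k \geq 0$, since $\bar\beta > 0$, $\alpha_j(\cdot) \in [0,1]$, and each $p_k \geq 0$; hence the flow cannot push $p_j$ below $0$. When $p_j = 1$ the infection term vanishes and $\dot p_j = -\delta_j < 0$, so $p_j$ cannot exceed $1$. Turning to a $p_{sj}$-coordinate, when $p_{sj} = 0$ we have $\tau_s \dot p_{sj} = \tfrac{1}{\hat d_j}\sum_{k} \hat a_{jk} p_k \geq 0$, so $\dot p_{sj} \geq 0$ because $\tau_s > 0$. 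When $p_{sj} = 1$, the key observation is that the averaging weights are row-stochastic, $\tfrac{1}{\hat d_j}\sum_k \hat a_{jk} = 1$, so $\tfrac{1}{\hat d_j}\sum_k \hat a_{jk} p_k \leq 1$ and thus $\tau_s \dot p_{sj} = -1 + \tfrac{1}{\hat d_j}\sum_k \hat a_{jk} p_k \leq 0$. This verifies the required inequality for every coordinate under each active constraint.

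The calculations above are individually routine, so I do not expect any hard estimate; the genuine subtlety lies on the lower-dimensional faces (edges and corners) where several constraints are simultaneously active. For a box, however, the tangent cone at such a point is exactly the intersection of the half-space tangent cones of the individual active constraints, and each sign check above relies only on the nonnegativity and boundedness-by-one of coordinates that remain inside $[0,1]$; no cross terms appear. Consequently the decoupled conditions jointly place the field in $T_{\mathcal{S}}(\mathbf{x})$ at those points as well. I would therefore expect the main care to go into (i) pinning down the structural hypotheses that fix each sign -- namely $\alpha_j \geq 0$ on $[0,1]$, positivity of $\bar\beta$, $\delta_j$, $\tau_s$, and $\hat d_j$, and the normalization $\tfrac{1}{\hat d_j}\sum_k \hat a_{jk} = 1$ -- and (ii) stating Nagumo's condition cleanly for a convex set, after which forward invariance of $\mathcal{S}$ follows immediately.
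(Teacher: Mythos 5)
Your proposal is correct and follows essentially the same route as the paper: both invoke Nagumo's tangency criterion and verify the same four sign conditions ($\dot p_j \geq 0$ at $p_j = 0$, $\dot p_j = -\delta_j < 0$ at $p_j = 1$, $\dot p_{sj} \geq 0$ at $p_{sj} = 0$, and $\dot p_{sj} \leq 0$ at $p_{sj} = 1$ via row-stochasticity of $\hat\Delta^{-1}\hat A$), combined with existence/uniqueness of solutions on the compact set $\mathcal{S}$. Your explicit treatment of the lower-dimensional faces (the tangent cone of the box decoupling coordinatewise) is a point the paper glosses over, but it is the same argument.
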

\begin{proof}
To prove this statement we invoke Nagumo's theorem \cite[Theorem 4.7]{blanchini2008set}. The boundary of $\mathcal{S}$ is the set $\partial \mathcal{S}$ of all points at which $p_j = 0$, $p_j = 1$, $p_{sj} = 0$, and/or $p_{sj} = 1$ for one or more $j\in \mathcal{V}$.  Observe that whenever $(\boldsymbol{p},\boldsymbol{p}_s) \in \partial \mathcal{S}$, $p_j = 0 \implies \dot{p}_j = \bar{\beta}  \sum_{k =1}^N \alpha_{jk}(p_{sj}) p_k \geq 0$; $p_j = 1 \implies \dot{p}_{j} = - \delta_j <0$; $p_{sj} = 0 \implies \dot{p}_{sj} = \frac{1}{\tau_s\hat{d}_j} \sum_{k = 1}^N \hat{a}_{jk} p_k \geq 0$; and $p_{sj} = 1 \implies \dot{p}_{sj} = (-1 + \frac{1}{\hat{d}_j} \sum_{k = 1}^N \hat{a}_{jk} p_k )/\tau_{sj} \leq 0 $ for all $j \in \mathcal{V}$. Therefore, at every point on $\partial \mathcal{S}$, the vector field of the dynamics points along the tangent cone $\mathcal{T}_{\mathcal{S}}(\boldsymbol{p},\boldsymbol{p}_s)$. Compactness of $\mathcal{S}$ implies existence and uniqueness of solutions of \eqref{eq:pdot},\eqref{eq:phat}, in $\mathcal{S}$ for all $t \geq 0$ \cite[Theorem 3.3]{khalil2002nonlinear}, from which the statement follows by \cite[Corollary 4.8]{blanchini2008set}.
\end{proof}

Next, we  investigate the existence and stability of equilibria in the model. The Jacobian of the linearization of \eqref{eq:pdot},\eqref{eq:phat} about an arbitrary point $(\mathbf{p},\mathbf{p}_s) \in [0,1]^{2N}$ is 
\begin{equation}
    { \small J(\mathbf{p},\mathbf{p}_s) = \begin{pmatrix} - D +  \bar{\beta} F_1( P_1 A - P_2) & \bar\beta F_2 P_1 P_2 \\
    \frac{1}{\tau_s} \hat{\Delta}^{-1} \hat{A} & -\frac{1}{\tau_s} I\end{pmatrix} } \label{eq:networked_jac}
\end{equation}
where $D = \operatorname{diag}\{(\delta_1, \dots, \delta_N)\}$, $P_1 = \operatorname{diag}\{\mathbf{1} - \mathbf{p}\}$,  $P_2 = \operatorname{diag}\{ A \mathbf{p}\}$, $F_1 = \operatorname{diag}\{\alpha_1(p_{s1}), \dots, \alpha_N(p_{sN})\}$, $F_2 = \operatorname{diag}\{(\alpha_1'(p_{s1}), \dots, \alpha_N'(p_{sN}))\}$. Observe that the origin, i.e., infection-free state $(\mathbf{p},\mathbf{p}_s) = (\mathbf{0},\mathbf{0})$, is always an equilibrium of \eqref{eq:pdot}, \eqref{eq:phat}. At this equilibrium, \eqref{eq:networked_jac} simplifies to 
\begin{equation}
    { \small J_{IFE} = \begin{pmatrix}
        -D + \bar{\beta} \operatorname{diag}\{ \alpha_1(0),\dots,\alpha_N(0) \} A & 0 \\ 
        \frac{1}{\tau_s} \hat{\Delta}^{-1} \hat{A} & -\frac{1}{\tau_s} I
    \end{pmatrix}. } \label{eq:JIFE}
\end{equation}
In the following theorem, we establish a transcritical bifurcation of the origin in the network actSIS model, in which the infection-free equilibrium (IFE) loses stability, and a stable endemic equilibrium (EE) appears.

\begin{theorem}[Local bifurcation of EE] \label{thm:bif}
Consider \eqref{eq:pdot},\eqref{eq:phat}, assume $A$, $\hat{A}$ are irreducible, define $ \Tilde{A} = \operatorname{diag}\{(\alpha_1(0),\dots,\alpha_N(0)\}A$. 
Then the following  hold. 1) The IFE is locally exponentially stable for $\bar\beta < \bar\beta^* = \frac{1}{\rho( \Tilde{A} D^{-1})}$ and unstable for $\bar\beta > \bar\beta^*.$ 2) Let $\mathbf{v}^* \succ \mathbf{0}$, $\mathbf{w}^* \succ \mathbf{0}$ be right and left null eigenvectors of $-D + \bar{\beta}^* \Tilde{A}$, and suppose $K_1 = \sum_{i = 1}^N w_i^*\left( (\hat{\Delta}^{-1} \hat{A} \mathbf{v}^*)_i \alpha_i'(0)(A \mathbf{v}^*)_i - 2 v_i^* \alpha_i(0)(A\mathbf{v}^*)_i \right).$
If $K_1 \neq 0$, then at $\bar\beta = \bar\beta^*$ a branch of locally exponentially stable EE,  $\boldsymbol{p}^{*} \succ \mathbf{0}$, $\boldsymbol{p}_s^* = \hat{\Delta}^{-1} \hat{A} \boldsymbol{p}^* \succeq \mathbf{0}$, appear in a transcritical bifurcation along an invariant center manifold tangent to $\operatorname{span}(\bar{\mathbf{v}})$ at $(\mathbf{p},\mathbf{p}_s,\Bar{\beta}) = (\mathbf{0},\mathbf{0}, \Bar{\beta}^*)$, where $\bar{\mathbf{v}} = (\mathbf{v}^*, \hat{\Delta}^{-1} \hat{A} \mathbf{v}^*)$.  If $K_1 < 0$, the EE appear for $\bar\beta > \bar\beta^*$ and are locally exponentially stable; if $K_1 > 0$, the EE appear for $\bar\beta < 0$ and are unstable. 
\end{theorem}
\begin{proof}
1) Observe that $J_{IFE}$ \eqref{eq:JIFE} is block diagonal. The bottom block has eigenvalues $-1/\tau_s$ with multiplicity $N$ that are always negative, and the top block is the matrix $-D + \bar{\beta} \Tilde{A} $. It is an irreducible Metzler matrix by assumptions of the theorem, and therefore has a unique eigenvalue  $\lambda_{max}(-D + \bar{\beta} \Tilde{A})$ that is negative if and only if $\bar{\beta}\rho(\Tilde{A}D^{-1}) < 1$ and zero if and only if $\bar{\beta}\rho(\Tilde{A}D^{-1}) = 1$, with corresponding positive eigenvector $\mathbf{v}^*$. 
Therefore, when $\bar\beta < \bar\beta^* $, all of the eigenvalues of \eqref{eq:JIFE} are negative and the IFE is locally exponentially stable. When $\bar\beta  > \Bar{\beta}^*$, the IFE is unstable \cite[Theorem 4.7]{khalil2002nonlinear}. 2) When $\bar{\beta} = \bar{\beta}^*$, $J_{IFE}$ has a zero eigenvalue; existence of an attracting invariant center manifold of the IFE follows by the Center Manifold Theorem. 
To classify the bifurcation we utilize results from singularity theory of bifurcations \cite{Golubitsky1985} and compute a Lyapunov-Schmidt (LS) reduction expansion. A LS reduction is a low-dimensional algebraic equation that describes the local topology of the bifurcation diagram (its solution sets are in one-to-one correspondence with the zero level sets of \eqref{eq:pdot},\eqref{eq:phat} near $\bar{\beta}^*$). A second order LS reduction for \eqref{eq:pdot},\eqref{eq:phat} reads $ h(x,\bar{\beta}) = K_0 x (\bar\beta - \bar\beta^*) + \bar{\beta}^*  K_1  x^2 + h.o.t.$
where $x$ is a scalar coordinate along a projection of the dynamics onto the kernel of $J_{IFE}$ at  $\bar{\beta} = \bar{\beta}^*$ and $K_0 = \langle \mathbf{w}^*, \tilde{A} \mathbf{v}^* \rangle > 0$. The details of this calculation are included in the Appendix. To identify the transcritical bifurcation, we verify that the LS reduction coefficients satisfy the necessary and sufficient conditions of the recognition problem \cite[Proposition 9.3]{Golubitsky1985}:   $h(0,\bar\beta^*) = h_x(0,\bar\beta^*) = h_\beta(0,\bar\beta^*) = 0$ where $\beta = \bar \beta - \bar \beta^*$. Furthermore, $\operatorname{sign}\{h_{xx}(0)\} = \operatorname{sign}(K_1)$ and $\operatorname{sign}\{\det d^2 h(0) \} = -1$, where $\det d^2 h(0) = \begin{vmatrix} h_{xx}(0) & h_{x \beta}(0) \\ h_{x \beta}(0) & h_{\beta \beta}(0) \end{vmatrix} = \begin{vmatrix} \bar\beta^*K_1 & K_0 \\ K_0 & 0 \end{vmatrix} = - (K_0)^2.$
We conclude that the bifurcation is transcritical. 
Local stability of the bifurcating branch of the EE follows by \cite[ Chapter I, \S4, Theorem 4.1]{Golubitsky1985}. 
\end{proof}


\vspace{-4mm}

\begin{corollary} \label{cor:bif_specialized}
    Consider \eqref{eq:pdot},\eqref{eq:phat}. 
    1) If $\alpha_j'(0) \leq 0$ for all $j \in \mathcal{V}$, then $K_1 < 0$ and locally exponentially stable EE appear for $\bar\beta > \bar\beta^*$;
    2) If $\delta_j = \delta > 0$, $\alpha_{j}(0) = 1$ for all $j \in \mathcal{V}$, then $\bar{\beta}^* = \delta/  \lambda_{max}(A)$ and $\mathbf{v}^* \succ \mathbf{0}$ is a Perron-Frobenius eigenvector of $A$.
\end{corollary}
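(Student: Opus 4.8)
Both parts follow directly from Theorem~\ref{thm:bif} by substitution and sign bookkeeping, so I would treat the corollary as a specialization rather than a fresh argument. The plan for part~1) is to show that every summand defining $K_1$ is strictly negative. First I would record the positivity facts supplied by Theorem~\ref{thm:bif} and the model assumptions: the null eigenvectors satisfy $\mathbf{v}^* \succ \mathbf{0}$ and $\mathbf{w}^* \succ \mathbf{0}$; since $A$ is nonnegative and irreducible with $\mathbf{v}^* \succ \mathbf{0}$, every node has at least one neighbor, so $A\mathbf{v}^* \succ \mathbf{0}$; the matrix $\hat{\Delta}^{-1}\hat{A}$ is nonnegative and row-stochastic, hence each entry $(\hat{\Delta}^{-1}\hat{A}\mathbf{v}^*)_i$ is a convex combination of the positive entries $v_k^*$ and is therefore strictly positive; and the standing assumption $\alpha_i(0)\neq 0$ together with $\alpha_i:[0,1]\to[0,1]$ forces $\alpha_i(0)\in(0,1]$.

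With these in hand, I would examine the $i$-th summand $w_i^*\big((\hat{\Delta}^{-1}\hat{A}\mathbf{v}^*)_i \alpha_i'(0)(A\mathbf{v}^*)_i - 2 v_i^*\alpha_i(0)(A\mathbf{v}^*)_i\big)$. The first bracketed term is a product of two strictly positive factors times $\alpha_i'(0)\le 0$, so it is nonpositive; the second is a product of strictly positive factors carrying a $-2$ prefactor, so it is strictly negative. Hence each bracket is strictly negative, and multiplying by $w_i^*>0$ and summing over $i$ gives $K_1<0$. Invoking the $K_1<0$ branch of Theorem~\ref{thm:bif} then yields locally exponentially stable EE for $\bar\beta>\bar\beta^*$.

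For part~2) I would substitute the homogeneous parameters $D=\delta I$ and $\alpha_j(0)=1$, which give $\tilde{A}=\operatorname{diag}\{\alpha_1(0),\dots,\alpha_N(0)\}A = A$ and $\tilde{A}D^{-1}=A/\delta$, so $\rho(\tilde{A}D^{-1})=\rho(A)/\delta$. Because $A$ is nonnegative and irreducible, Perron--Frobenius gives $\rho(A)=\lambda_{max}(A)$, whence $\bar\beta^*=1/\rho(\tilde{A}D^{-1})=\delta/\lambda_{max}(A)$. Finally, the defining relation for the right null eigenvector, $(-D+\bar\beta^*\tilde{A})\mathbf{v}^*=\mathbf{0}$, reduces to $\big(-\delta I + (\delta/\lambda_{max}(A))A\big)\mathbf{v}^*=\mathbf{0}$, i.e. $A\mathbf{v}^*=\lambda_{max}(A)\mathbf{v}^*$, identifying $\mathbf{v}^*\succ\mathbf{0}$ as the Perron--Frobenius eigenvector of $A$.

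The computations are short; the only place demanding care is the strict-positivity bookkeeping in part~1), specifically deducing $A\mathbf{v}^*\succ\mathbf{0}$ and $(\hat{\Delta}^{-1}\hat{A}\mathbf{v}^*)_i>0$ from irreducibility rather than merely $\succeq\mathbf{0}$, and recognizing that $\alpha_i(0)$ is strictly positive (not just nonzero) so that the second term cannot vanish. These are precisely what upgrade the conclusion from $K_1\le 0$ to the strict $K_1<0$ that Theorem~\ref{thm:bif} requires in order to place the bifurcating branch at $\bar\beta>\bar\beta^*$ and certify its stability.
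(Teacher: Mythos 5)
Your proof is correct, and it takes exactly the route the paper intends: the paper states Corollary~\ref{cor:bif_specialized} without proof, treating it as an immediate specialization of Theorem~\ref{thm:bif}, which is precisely what you carry out. Your sign bookkeeping for $K_1$ (including the strict positivity of $A\mathbf{v}^*$ and $\hat{\Delta}^{-1}\hat{A}\mathbf{v}^*$ from irreducibility, and $\alpha_i(0)\in(0,1]$) and the Perron--Frobenius identification in part~2) supply correctly the details the paper leaves implicit.
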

Case 1 of Corollary~\ref{cor:bif_specialized} is relevant in the next section where we study networks of risk averter populations. Case 2 shows mild conditions under which the critical value $\bar\beta*$ is directly related to the leading eigenvector of unweighted contact matrix $A$ and the EE infection levels are predicted by the elements of the Perron-Frobenius eigenvector of $A$. 

According to Theorem \ref{thm:bif}, the IFE in the network actSIS model \eqref{eq:pdot},\eqref{eq:phat} loses stability in a transcritical bifurcation. This result is an extension of a similar transcritical bifurcation of an EE that is well-known to exist in the SIS, network SIS, and more recently actSIS models \cite{mei2017dynamics}, \cite{zhou2020active}. Interestingly, the communication adjacency matrix $\hat{A}$ does not determine the bifurcation point of the EE and does not play a significant role in shaping the steady-state infection level at the EE near the bifurcation point. However, recall that in the network SIS model, the EE is unique and stable for all $\bar{\beta} > \Bar{\beta}^*$. We will show in the following section that this is not always the case for the EE of Theorem \ref{thm:bif}, and properties of this equilibrium at larger values of $\bar{\beta}$ depend strongly on properties of the communication graph $\hat{A}$.

\vspace{-2mm}
\section{Risk aversion in network actSIS epidemics \label{sec:riskaverters}}

In this section, we specialize the network actSIS model \eqref{eq:pdot},\eqref{eq:phat} to populations of \textit{risk averters}, i.e. populations that decrease contact rates with other populations when their perceived risk of infection rises, meaning $\alpha_j'(p_{sj}) < 0$ whenever $p_{sj}  \in (0,1)$. 
In shown simulations, all populations follow the saturating strategy $\alpha_A$ proposed in \cite{zhou2020active}
\begin{equation}
    {\tiny \alpha_j(p_{sj}) = \alpha_A(p_{sj}) := \frac{\mu^\nu (1 - p_{sj})^\nu}{p_{sj}^{\nu}(1 - \mu)^\nu + \mu^\nu (1 - p_{sj})^\nu} } \label{eq:alpha_A}
\end{equation}
where $\mu \in (0,1)$  describes the midpoint of saturation, and $\nu > 1$ tunes the steepness of the saturation slope. 
First we establish a lemma for the well-mixed actSIS model \eqref{eq:actSIS_2d}.

\begin{lemma}[Risk aversion lowers EE in well-mixed model] \label{lem:actSIS_averters}Consider the models from \eqref{eq:SIS} and \eqref{eq:actSIS_2d} with $\beta = \Bar{\beta}>\delta$ and $\alpha'(p)<0$ for all $p \in (0,1)$. Let $p^*_{\text{SIS}}(\bar\beta,\delta)$ and $p^*_{\text{actSIS}}(\bar\beta,\delta)$ be the steady-state infection level in the population at the EE of the corresponding model, with shared $\delta$, $\bar{\beta}>\bar\beta^*$. If $\alpha$ is a risk aversion strategy, then $ p^*_{\text{SIS}}(\bar\beta,\delta)\geq p^*_{\text{actSIS}}(\bar\beta,\delta)$.



\end{lemma}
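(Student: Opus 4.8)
The plan is to characterize both endemic equilibria explicitly and then compare them using the single structural fact that $\alpha$ takes values in $[0,1]$. First I would locate the equilibria. For the scalar SIS model \eqref{eq:SIS}, setting $\dot p = 0$ and dividing out the nonzero root gives the familiar endemic level $p^*_{\text{SIS}} = 1 - \delta/\bar\beta$, which lies in $(0,1)$ precisely because $\bar\beta > \delta$. For the well-mixed actSIS model \eqref{eq:actSISp},\eqref{eq:actSISps}, I would exploit the filter equation \eqref{eq:actSISps} first: at any equilibrium $\dot p_s = 0$ forces $p_s = p$, so the perceived and true infection levels coincide. Substituting $p_s = p$ into \eqref{eq:actSISp} and dividing by the nonzero factor $p$, the endemic level $p^*_{\text{actSIS}}$ is characterized as a root in $(0,1)$ of
\begin{equation*}
G(p) := \bar\beta\,\alpha(p)(1-p) - \delta = 0.
\end{equation*}

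Next I would confirm such a root exists and is unique. The lemma presupposes an endemic equilibrium, and existence follows whenever $\bar\beta$ exceeds the actSIS threshold $\delta/\alpha(0)$ — which for the strategies of interest, e.g.\ \eqref{eq:alpha_A} with $\alpha(0)=1$, coincides with the SIS threshold $\delta$: then $G(0) = \bar\beta\,\alpha(0) - \delta > 0$ while $G(1) = -\delta < 0$, so the intermediate value theorem supplies a root in $(0,1)$. This is where risk aversion enters: differentiating, $G'(p) = \bar\beta\big(\alpha'(p)(1-p) - \alpha(p)\big)$, and with $\alpha'(p) < 0$ on $(0,1)$, $\alpha(p) \geq 0$, and $1-p > 0$, both bracketed terms are nonpositive, so $G$ is strictly decreasing and $p^*_{\text{actSIS}}$ is unique (and the stable branch of Theorem~\ref{thm:bif} and Corollary~\ref{cor:bif_specialized}, case~1).

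The comparison is then immediate, and is driven entirely by the codomain constraint rather than by monotonicity. Because $\alpha : [0,1] \to [0,1]$ we have $\alpha(p^*_{\text{actSIS}}) \leq 1$, so evaluating the equilibrium identity $G(p^*_{\text{actSIS}}) = 0$ gives
\begin{equation*}
\delta = \bar\beta\,\alpha(p^*_{\text{actSIS}})\big(1 - p^*_{\text{actSIS}}\big) \leq \bar\beta\big(1 - p^*_{\text{actSIS}}\big).
\end{equation*}
Dividing by $\bar\beta > 0$ and recalling $1 - p^*_{\text{SIS}} = \delta/\bar\beta$ yields $1 - p^*_{\text{actSIS}} \geq 1 - p^*_{\text{SIS}}$, i.e.\ $p^*_{\text{actSIS}} \leq p^*_{\text{SIS}}$, as claimed.

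I do not anticipate a genuinely hard step: the inequality uses only $\alpha \leq 1$, while the risk aversion hypothesis serves to pin down a unique, stable equilibrium to compare against. The most delicate point to get right is threshold bookkeeping — ensuring both models are taken past their respective bifurcation points so that a nontrivial EE actually exists ($\bar\beta > \delta$ for SIS, with $\alpha(0) = 1$ aligning the actSIS threshold to it) — rather than any subtlety in the estimate itself. I would also note that if $\alpha$ is strictly decreasing with $\alpha(0) = 1$, then $\alpha(p^*_{\text{actSIS}}) < 1$ whenever $p^*_{\text{actSIS}} > 0$, and the same computation sharpens the conclusion to the strict inequality $p^*_{\text{actSIS}} < p^*_{\text{SIS}}$.
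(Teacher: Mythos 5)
Your proof is correct, but it takes a genuinely different route from the paper's. The paper proves Lemma~\ref{lem:actSIS_averters} by a homotopy argument: it defines $\gamma(p,s) := (1-s) + s\alpha(p)$, which interpolates between the SIS model at $s=0$ and the actSIS model at $s=1$, implicitly differentiates the equilibrium condition $\bar\beta\,\gamma(p^*,s)(1-p^*) - \delta = 0$ in $s$, and shows $\frac{dp^*}{ds} < 0$ along the branch (risk aversion, $\alpha'<0$, forces the denominator of \eqref{eq:dpdt} to be negative while the numerator is positive), so the endemic level decreases monotonically as the strategy is ``switched on.'' You instead characterize both equilibria explicitly and compare them in one line using only the codomain bound $\alpha \leq 1$: from $\bar\beta\,\alpha(p^*_{\text{actSIS}})(1-p^*_{\text{actSIS}}) = \delta$ you get $1-p^*_{\text{actSIS}} \geq \delta/\bar\beta = 1 - p^*_{\text{SIS}}$. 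Your route is more elementary (no implicit differentiation or branch-continuation), it isolates the actual mechanism --- the inequality needs only $\alpha \leq 1$, so it holds for \emph{any} equilibrium of \emph{any} strategy, with $\alpha'<0$ serving only to make the EE unique and hence ``the'' EE well defined --- and it gives strictness essentially for free when $\alpha(p^*_{\text{actSIS}}) < 1$. What the paper's homotopy buys in exchange is a stronger structural statement: monotonicity of the endemic level in the interpolation parameter $s$, i.e., in the degree to which the population adopts the risk-aversion strategy, and a computation that the paper reuses verbatim in the proof of Theorem~\ref{thm:uniform_EE_lower}; note, though, that your direct comparison ports to that regular-graph setting just as easily, since the equilibrium condition there is $\bar\beta d\,\alpha(p^*)(1-p^*) = \delta$ and the same $\alpha \leq 1$ bound applies.
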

\begin{proof}
Define the function $\gamma(p,s):=(1-s)+s\alpha(p)$
with $s\in [0,1]$, and consider \eqref{eq:actSIS_2d} with $\alpha$ replaced by $\gamma$. 
Notice that $s=0$ recovers the standard SIS model \eqref{eq:SIS}, and $s=1$ recovers the actSIS model \eqref{eq:actSIS_2d}. 
The equilibria of this modified model are defined implicitly as a function of $s$ by $\bar\beta\gamma(p^*,s)(1-p^*)p^*-\delta p^*=0$
and $p_s^* = p^*$. 
Differentiating with respect to $s$ and solving for $ \frac{dp^*}{ds}$,
\begin{equation}
    \frac{dp^*}{ds} = \frac{(1-p^*)(1-\alpha(p^*))}{s \big(\alpha'(p^*) (1-p^*) +  (1 - \alpha(p^*)) \big) - 1}. \label{eq:dpdt}
\end{equation}
For any $p^* \in (0,1)$, $\alpha(p^*) \in (0,1)$ and the numerator in \eqref{eq:dpdt} is positive. Observe that whenever 
$ \alpha'(p^*) < \frac{1/s - (1 - \alpha)}{1 - p^*}$
the denominator of \eqref{eq:dpdt} is negative and $\frac{dp^*}{ds}<0$.  Given any $s \in (0,1)$, $1 < 1/s < \infty$, which means that $\alpha'(p^*) < \frac{\alpha(p^*)}{1 - p^*}$ is sufficient to conclude $\frac{dp^*}{ds} < 0$. 
In turn, this condition is satisfied for all $p^* \in (0,1)$ given any $\alpha$ satisfying assumptions of the lemma, since $\alpha(p^*)/(1-p^*) > 0$, and  $\alpha'(p) < 0$ by definition of risk aversion.  
\end{proof}

\begin{figure}
    \centering
    \begin{subfigure}{.24\textwidth}
        \includegraphics[width=1.0\linewidth]{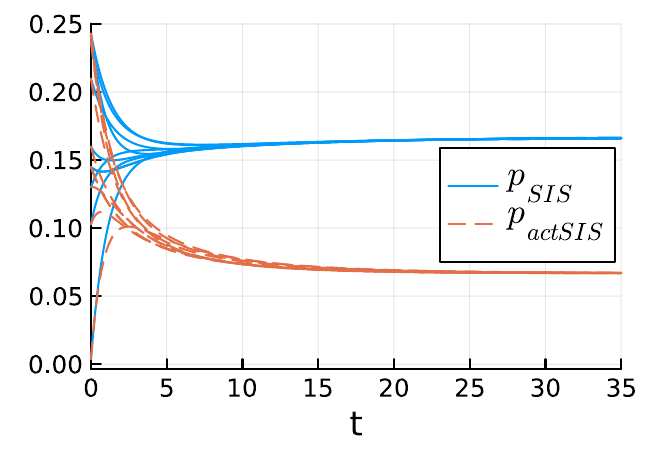}
        \caption{$d=2$ and $\hat{d}=5$.}
    \end{subfigure}%
    \begin{subfigure}{.24\textwidth}
        \includegraphics[width=1.0\linewidth]{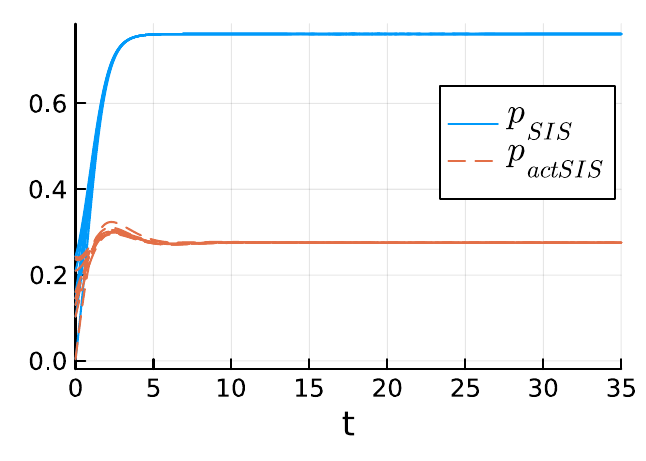}
        \caption{$d=7$ and $\hat{d}=5$.}
    \end{subfigure}
    \caption{Simulation for network actSIS \eqref{eq:pdot},\eqref{eq:phat}  with $10$ nodes and regular contact graph and communication graph, using \cite{rackauckas2017differentialequations}. The degree of $\hat{A}$ is constant but $A$ is sparse (left) or dense (right). 
    $\bar\beta = 0.3, \delta = 0.5, \mu = 0.5, \nu = 1.7, \tau_s = 1$.}
\label{fig:EEcomparison}
\end{figure}

Lemma \ref{lem:actSIS_averters} proves, in a different way than in \cite{zhou2020active}, that risk aversion lowers infection rate at EE in \eqref{eq:actSIS_2d}. Next, we consider risk aversion strategies in network actSIS, and prove an analogous result. We restrict our attention to contact and communication graphs that are undirected and regular. By Corollary \ref{cor:bif_specialized}, for populations of risk averters, the necessary and sufficient conditions for a transcritical bifurcation of a stable EE in Theorem \ref{thm:bif} are always satisfied. 

\begin{lemma}[Uniform Endemic Equilibrium (UEE)] \label{lem:UEE}
    Let $A$ and $\hat{A}$ correspond to connected regular graphs with degrees $d$ and $\hat{d}$, respectively, and assume $\delta_j = \delta$, $\alpha_j = \alpha$ for all $j \in \mathcal{V}$.
    1) The unique, globally stable EE of the network SIS model \eqref{eq:NetworkSIS} with contact graph $A$ is $\mathbf{p}^*_{SIS} = p_{SIS}^* \mathbf{1}$ where $ p_{SIS}^*  = 1 - \frac{\delta}{\Bar{\beta} d}$;
    2) The EE bifurcating from IFE for the network actSIS model \eqref{eq:pdot},\eqref{eq:phat} with contact graph $A$ and communication graph $\hat{A}$ is $(\mathbf{p}_{actSIS}^*,\mathbf{p}_{s}^*) = (p_{actSIS}^* \mathbf{1}, p_{actSIS}^* \mathbf{1})$ where $p_{actSIS}^*$ 
    satisfies $\bar\beta d (1- p_{actSIS}^*) \alpha(p_{actSIS}^*) - \delta = 0$. 
    This equilibrium exists for all $\bar\beta > \delta/d$.
\end{lemma}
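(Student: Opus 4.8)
For part 1, I would combine the known global stability theory for network SIS with a direct verification. On a strongly connected (irreducible) contact graph, \eqref{eq:NetworkSIS} is known to admit, whenever $\bar\beta\rho(A) > \delta$, a unique endemic equilibrium that is globally asymptotically stable on $[0,1]^N\setminus\{\mathbf{0}\}$ \cite{mei2017dynamics}. On a connected $d$-regular graph, $\mathbf{1}$ is the Perron--Frobenius eigenvector of $A$ with $\rho(A)=d$, so this threshold becomes $\bar\beta d > \delta$. It then suffices to exhibit the equilibrium: substituting the uniform ansatz $\mathbf{p}=p^*\mathbf{1}$ into the equilibrium condition of \eqref{eq:NetworkSIS} and using $\sum_k a_{jk}=d$ gives $p^*\big[(1-p^*)\bar\beta d-\delta\big]=0$, whose nonzero root is $p^*_{SIS}=1-\delta/(\bar\beta d)$. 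By uniqueness of the EE, this uniform vector is the globally stable equilibrium claimed.

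For part 2, I would first collapse the two equilibrium conditions into a single scalar equation. Setting the right-hand side of \eqref{eq:phat} to zero gives $p_{sj}=\frac{1}{\hat d}\sum_k\hat a_{jk}p_k$ at any equilibrium; inserting $\mathbf{p}=p^*\mathbf{1}$ and using $\sum_k\hat a_{jk}=\hat d$ yields $p_{sj}=p^*$ for every $j$, so $\mathbf{p}_s^*=p^*\mathbf{1}$. Substituting both into \eqref{eq:pdot} and using $\sum_k a_{jk}=d$ then factors as $p^*\big[\bar\beta d(1-p^*)\alpha(p^*)-\delta\big]=0$, which is exactly the implicit definition of $p^*_{actSIS}$ in the statement. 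To establish existence over the whole range $\bar\beta>\delta/d$, I would analyze $g(p):=\bar\beta d(1-p)\alpha(p)-\delta$ on $[0,1]$. Since $\alpha(0)=1$ for a risk aversion strategy (as for \eqref{eq:alpha_A}), $g(0)=\bar\beta d-\delta>0$ whenever $\bar\beta>\delta/d$, while $g(1)=-\delta<0$; continuity of $g$ (as $\alpha\in C^1$) and the intermediate value theorem produce a root $p^*\in(0,1)$. Differentiating, $g'(p)=\bar\beta d\big[(1-p)\alpha'(p)-\alpha(p)\big]<0$ on $(0,1)$ because $\alpha'(p)<0$ and $\alpha(p)>0$, so $g$ is strictly decreasing and the root is unique.

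It remains to identify this uniform solution with the branch emanating from the IFE in Theorem~\ref{thm:bif}. There the bifurcation direction is $\bar{\mathbf{v}}=(\mathbf{v}^*,\hat\Delta^{-1}\hat A\mathbf{v}^*)$ with $\mathbf{v}^*$ the null eigenvector of $-\delta I+\bar\beta^*\tilde A$; for the regular graph $\tilde A=A$, so $\mathbf{v}^*=\mathbf{1}$, and regularity of $\hat A$ gives $\hat\Delta^{-1}\hat A\mathbf{1}=\mathbf{1}$, whence $\bar{\mathbf{v}}=(\mathbf{1},\mathbf{1})$ is uniform. Moreover, $g$ strictly decreasing with $g(0)\to 0^+$ as $\bar\beta\to(\delta/d)^+$ forces $p^*\to 0^+$, so the uniform branch detaches from the IFE precisely at $\bar\beta^*=\delta/d$. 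I expect this identification to be the main obstacle: Theorem~\ref{thm:bif} only guarantees a branch \emph{locally} near $\bar\beta^*$, whereas the claim asserts existence over the \emph{entire} range $\bar\beta>\delta/d$. The strict monotonicity of $g$ is what bridges this gap, simultaneously delivering global existence, uniqueness of the uniform equilibrium, and---through its limiting behavior at threshold---the continuation of the locally bifurcating branch to all supercritical $\bar\beta$.
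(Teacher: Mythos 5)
Your proof is correct, and it goes well beyond what the paper provides: the paper omits the proof entirely, remarking only that it ``amounts to plugging in the stated expressions into the two sets of model equations.'' Your substitution of the uniform ansatz (collapsing $\mathbf{p}_s^* = p^*\mathbf{1}$ via regularity of $\hat A$, then factoring the scalar equation) is exactly that plug-in verification, so the core approach coincides. But you correctly recognize that plugging in alone does not establish everything claimed: part 1's ``unique, globally stable'' assertion must be imported from network SIS theory (your citation of \cite{mei2017dynamics} together with the Perron--Frobenius identification $\rho(A)=d$, eigenvector $\mathbf{1}$, is the right way to do this), and the claim that the actSIS equilibrium ``exists for all $\bar\beta > \delta/d$'' is a global statement that Theorem~\ref{thm:bif} cannot deliver, since it only produces a branch locally near $\bar\beta^*$; your intermediate-value argument on $g(p)=\bar\beta d(1-p)\alpha(p)-\delta$, plus the limit $p^*\to 0^+$ identifying the uniform family with the bifurcating branch, is what actually closes this gap.

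One caveat: your argument invokes two hypotheses not stated in the lemma. You take $\alpha(0)=1$ to get $g(0)=\bar\beta d-\delta>0$; without this normalization the threshold is $\delta/(d\,\alpha(0))$, not $\delta/d$, so the lemma as written implicitly requires it (it holds for the strategy \eqref{eq:alpha_A}, and flagging it is a genuine observation about the paper's statement). Your monotonicity step $g'(p)<0$ additionally uses $\alpha'(p)<0$, i.e., risk aversion, which is assumed in the surrounding section and in Theorem~\ref{thm:uniform_EE_lower} but not in the lemma itself; note this is needed only for uniqueness of the root (so that $p^*_{actSIS}$ is well defined by the implicit equation), while existence follows from continuity alone.
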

Lemma~\ref{lem:UEE} says that the EE (that bifurcates from the IFE) for regular graphs corresponds to uniform levels of infection among the populations in network SIS and network actSIS. We omit the proof of Lemma \ref{lem:UEE} as it amounts to plugging in the stated expressions into the two sets of model equations. 


\begin{theorem}[Risk aversion lowers UEE on regular graphs] \label{thm:uniform_EE_lower}
    Let $A$ and $\hat{A}$ correspond to connected regular graphs with degrees $d$ and $\hat{d}$, respectively,  and assume $\delta_j = \delta$, $\alpha_j = \alpha$ for all $j \in \mathcal{V}$, and $\alpha'(p)<0$ for all $p \in (0,1)$. Let $\mathbf{p}^*_{SIS}$ be the UEE of \eqref{eq:NetworkSIS} with $\beta = \bar\beta>\delta/d$, and let $(\mathbf{p}_{actSIS}^*,\mathbf{p}_{s}^*)$ be the UEE of \eqref{eq:pdot},\eqref{eq:phat}. Then $\mathbf{p}_{SIS}^* \succeq \mathbf{p}^*_{actSIS}$. 
    

\end{theorem}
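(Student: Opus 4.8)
The plan is to reduce the componentwise vector inequality $\mathbf{p}_{SIS}^* \succeq \mathbf{p}_{actSIS}^*$ to a single scalar comparison, exploiting the fact established in Lemma~\ref{lem:UEE} that on connected regular graphs both UEE are uniform: $\mathbf{p}_{SIS}^* = p_{SIS}^*\mathbf{1}$ and $(\mathbf{p}_{actSIS}^*,\mathbf{p}_s^*) = (p_{actSIS}^*\mathbf{1}, p_{actSIS}^*\mathbf{1})$. It therefore suffices to prove the scalar bound $p_{SIS}^* \geq p_{actSIS}^*$. The key structural observation is that at a uniform state on a $d$-regular contact graph every node sees exactly $d$ infected neighbors each at level $p^*$, so each per-node equilibrium condition collapses to the corresponding well-mixed equilibrium equation with the intrinsic rate $\bar\beta$ replaced by the effective rate $\bar\beta d$; in particular the filtered observation satisfies $p_s^* = p^*$, matching the well-mixed structure.

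Concretely, I would write down the two scalar defining equations supplied by Lemma~\ref{lem:UEE}, namely $\bar\beta d(1 - p_{SIS}^*) = \delta$ and $\bar\beta d(1 - p_{actSIS}^*)\alpha(p_{actSIS}^*) = \delta$, and equate their left-hand sides through the common value $\delta$ to obtain the identity $1 - p_{SIS}^* = (1 - p_{actSIS}^*)\,\alpha(p_{actSIS}^*)$. Since $p_{actSIS}^*$ is the infection level of a genuine EE it lies in $(0,1)$: it is strictly positive by Theorem~\ref{thm:bif} (the bifurcating branch is $\succ \mathbf{0}$), and strictly below $1$ because otherwise the left-hand side of its defining equation would vanish while $\delta > 0$. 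Hence $1 - p_{actSIS}^* > 0$, and because $\alpha$ maps $[0,1]$ into $[0,1]$ we have $\alpha(p_{actSIS}^*) \leq 1$. The identity then gives $1 - p_{SIS}^* \leq 1 - p_{actSIS}^*$, i.e. $p_{SIS}^* \geq p_{actSIS}^*$, which is the claim. The risk aversion hypothesis $\alpha'(p) < 0$ on $(0,1)$ in fact forces $\alpha(p_{actSIS}^*) < \alpha(0) \leq 1$, sharpening this to a strict inequality whenever $p_{actSIS}^* > 0$.

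An equivalent and more conceptual route is to invoke Lemma~\ref{lem:actSIS_averters} directly with the rescaled transmission rate $\tilde\beta := \bar\beta d$: the two scalar equations above are precisely the well-mixed SIS \eqref{eq:SIS} and actSIS \eqref{eq:actSISp},\eqref{eq:actSISps} equilibrium conditions at transmission rate $\tilde\beta$, and the standing hypothesis $\bar\beta > \delta/d$ is exactly $\tilde\beta > \delta$, so Lemma~\ref{lem:actSIS_averters} applies verbatim and returns $p_{SIS}^*(\tilde\beta,\delta) \geq p_{actSIS}^*(\tilde\beta,\delta)$. I do not expect a substantial obstacle: once uniformity of both equilibria is granted by Lemma~\ref{lem:UEE}, the theorem is essentially the regular-graph shadow of the well-mixed Lemma~\ref{lem:actSIS_averters}. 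The only points needing care are confirming $p_{actSIS}^* \in (0,1)$ so that the factor $1 - p_{actSIS}^*$ is positive, and invoking the codomain restriction $\alpha([0,1]) \subseteq [0,1]$ to bound $\alpha(p_{actSIS}^*) \leq 1$; both follow immediately from the standing assumptions on $\alpha$ and from Theorem~\ref{thm:bif}.
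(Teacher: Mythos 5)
Your proposal is correct, and the argument you lead with is genuinely different from --- and more elementary than --- the paper's. The paper, after reducing to the scalar comparison via Lemma \ref{lem:UEE} exactly as you do, reuses the homotopy machinery of Lemma \ref{lem:actSIS_averters}: it interpolates $\gamma(p,s) = (1-s) + s\alpha(p)$ between the SIS ($s=0$) and actSIS ($s=1$) equilibrium equations with the rescaled rate $\bar\beta d$, and signs $dp^*/ds$ by implicit differentiation; this is precisely your ``equivalent and more conceptual route.'' Your primary argument instead equates the two defining equations through the common value $\delta$ to get $1 - p_{SIS}^* = (1-p_{actSIS}^*)\,\alpha(p_{actSIS}^*)$ and concludes directly from $0 \leq \alpha \leq 1$. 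This buys several things: it needs no implicit-function-theorem reasoning or smooth dependence of $p^*(s)$ along the homotopy (a point the paper's lemma passes over quickly); it never uses the risk-aversion hypothesis $\alpha'(p) < 0$, so the comparison in fact holds for \emph{any} admissible strategy with $\alpha([0,1]) \subseteq [0,1]$, with risk aversion serving only to sharpen the bound to a strict inequality; and it is a two-line computation. What the paper's homotopy route buys in exchange is strictly more information: monotonicity of the endemic level along the entire interpolation family $s \in [0,1]$ (partial adoption of the strategy already lowers the UEE, monotonically in the adoption parameter $s$), plus uniformity of exposition with the well-mixed Lemma \ref{lem:actSIS_averters}. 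Your care about $p_{actSIS}^* \in (0,1)$ is slightly more than is needed --- the inequality direction only requires $1 - p_{actSIS}^* \geq 0$, which well-posedness (Theorem \ref{thm:welldef}) already guarantees --- but it is not wrong, and it is what upgrades the conclusion to a strict inequality.
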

\begin{proof}
    Existence of UEE in both models for $\bar\beta \geq \delta/d$ follows from Lemma \ref{lem:UEE}. 
   To show  $p_{SIS}^* \geq p_{actSIS}^*$ 
    let $\gamma(p,s):=(1-s)+s\alpha(p)$ with $s\in [0,1]$. 
    The solution to $\bar\beta d \gamma(p^*,s)(1-p^*)-\delta =0$ recovers $p_{SIS}^*$ when $s = 0$ and $p_{actSIS}^*$ when $s = 1$. The rest of the proof follows as for Lemma \ref{lem:actSIS_averters}.
    \end{proof}

We illustrate Theorem \ref{thm:uniform_EE_lower} with numerical simulations in Fig. \ref{fig:EEcomparison}. 
As predicted,  the infection level at the UEE is uniform among populations and lower for the network actSIS dynamics than for the network SIS dynamics, with all other relevant parameters shared between the models.   %

\begin{theorem}[Stability of UEE on regular graphs]

Consider \eqref{eq:pdot},\eqref{eq:phat}. 
Let $A$ and $\hat{A}$ correspond to connected regular graphs with degrees $d$ and $\hat{d}$, respectively,  and assume $\delta_j = \delta$, $\alpha_j = \alpha$ for all $j \in \mathcal{V}$. 
Define
\begin{equation}
  g(p) \hspace{-1mm} = \hspace{-1mm} - \delta p + (1\hspace{-1mm}-\hspace{-1mm}p) \hspace{-1mm}\left(\hspace{-1mm}\lambda_{max}\hspace{-1mm}\left(\frac{1}{d} A + \frac{1}{\hat{d}} \frac{\alpha'(p)}{\alpha(p)} p \hat{A} \right) - 1 \right) \,. \label{eq:stab_cond}
\end{equation}
If $g(p) = 0$ for some $p \in (0,1)$,
then there exists a critical value $ \bar\beta_2 = \frac{\delta}{d (1 - p_2) \alpha(p_2)} > \bar\beta^*$ 
where $p_2$ is the smallest $p \in (0,1)$ for which $g(p) = 0$. If $p_2$ is not a unique solution to $g(p) = 0$ in $(0,1)$, the UEE $(\mathbf{p}^*, \mathbf{p}_s^*) = (p^* \mathbf{1}, p^* \mathbf{1})$ is locally exponentially stable whenever $\Bar{\beta}^* < \bar{\beta} < \bar{\beta}_2$ and unstable for $\bar{\beta} \in (\bar{\beta}_2, \Bar{\beta}_3)$ for some $\Bar{\beta}_3 = \frac{\delta}{d (1 - p_3) \alpha(p_3)}$, $p_3 > p_2$, $g(p_3) = 0$. If $p_2$ is a unique solution to $g(p) = 0$ in $(0,1)$, then the UEE is unstable for all $\bar\beta > \bar\beta_2$.    \label{thm:uniform_EE_stab}
\end{theorem}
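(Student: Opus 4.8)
The plan is to linearize at the UEE, use regularity to bring the Jacobian to a block form with scalar off‑diagonal and lower‑right blocks, convert the resulting $2N\times2N$ spectral problem into a symmetric quadratic eigenvalue problem (QEP), and then reduce local stability to the definiteness of a single $N\times N$ matrix whose sign coincides with that of $g$. First I would evaluate \eqref{eq:networked_jac} at $(p^*\mathbf 1,p^*\mathbf 1)$. Regularity gives $A\mathbf 1=d\mathbf 1$ and $\hat A\mathbf 1=\hat d\mathbf 1$, so $D=\delta I$, $P_1=(1-p^*)I$, $P_2=p^*dI$, $F_1=\alpha(p^*)I$, $F_2=\alpha'(p^*)I$, and $\hat\Delta^{-1}=\tfrac1{\hat d}I$. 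Using the equilibrium identity $\bar\beta d(1-p^*)\alpha(p^*)=\delta$ from Lemma~\ref{lem:UEE} to collapse the diagonal terms yields
\begin{equation*}
J=\begin{pmatrix} \tfrac{\delta}{d}A-\tfrac{\delta}{1-p^*}I & \tfrac{\delta p^*\alpha'(p^*)}{\alpha(p^*)}I\\[2pt] \tfrac{1}{\tau_s\hat d}\hat A & -\tfrac{1}{\tau_s}I\end{pmatrix},
\end{equation*}
whose top‑right and bottom‑right blocks are scalar multiples of $I$.

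Because the lower‑right block is a scalar multiple of $I$, it commutes with the lower‑left block, so I would apply $\det\begin{pmatrix}X&Y\\Z&W\end{pmatrix}=\det(XW-YZ)$ (valid when $Z,W$ commute) to $J-\lambda I$ and obtain the exact characteristic polynomial as a QEP $\det(\lambda^2I+\lambda C_1+C_0)=0$, with $C_1=\big(\tfrac1{\tau_s}+\tfrac{\delta}{1-p^*}\big)I-\tfrac{\delta}{d}A$ and, after simplification, $C_0=\tfrac{\delta}{\tau_s}\big(\tfrac{1}{1-p^*}I-B(p^*)\big)$, where $B(p):=\tfrac1d A+\tfrac{p\alpha'(p)}{\hat d\alpha(p)}\hat A$ is exactly the matrix in \eqref{eq:stab_cond}. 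Since $A,\hat A$ are symmetric with eigenvalues bounded in modulus by $d,\hat d$, one checks $C_1\succ0$ unconditionally, while $C_0\succ0$ is precisely $\lambda_{max}(B(p^*))<\tfrac1{1-p^*}$, i.e. $g(p^*)<0$.

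The core of the argument, and the step I expect to be the main obstacle, is the stability theory of this symmetric QEP, since $A$ and $\hat A$ need not commute and $J$ is not Metzler. I would establish three facts. (i) No eigenvalue sits on the imaginary axis off the origin: if $(-\omega^2I+i\omega C_1+C_0)w=0$ with $\omega\neq0$, then the imaginary part of $w^*(-\omega^2I+i\omega C_1+C_0)w=0$ reads $\omega\,w^*C_1w=0$, forcing $w=0$ because $C_1\succ0$; this rules out Hopf bifurcations. (ii) When $C_0\succ0$, the second‑order system $\ddot x+C_1\dot x+C_0x=0$ admits the Lyapunov function $V=\tfrac12|\dot x|^2+\tfrac12x^TC_0x$ with $\dot V=-\dot x^TC_1\dot x\le0$, so LaSalle gives asymptotic stability and every eigenvalue of $J$ lies in the open left half‑plane. (iii) By (i), the number of right‑half‑plane eigenvalues of $J$ can change only as $p^*$ passes a value where $\det C_0=0$, i.e. where $g(p^*)=0$; at such a simple crossing the vanishing eigenvalue is real (by symmetry of $C_0,C_1$) and, since $\det C_0$ changes sign, it passes through the origin into the opposite half‑plane. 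Combining (ii) and (iii), the UEE is locally exponentially stable exactly while $g(p^*)<0$ and sheds a real eigenvalue into the right half‑plane at each zero of $g$.

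Finally I would convert the condition on $p^*$ into one on $\bar\beta$. The equilibrium relation makes $\bar\beta(p)=\tfrac{\delta}{d(1-p)\alpha(p)}$ a strictly increasing bijection of $p\in(0,1)$ onto $(\bar\beta^*,\infty)$, using $\alpha'<0$ so that $(1-p)\alpha(p)$ is decreasing; hence $\bar\beta_2=\bar\beta(p_2)>\bar\beta^*$. Since $g(0)=0$ with $g<0$ immediately to the right of $0$, and $p_2$ is the smallest zero of $g$ in $(0,1)$, we have $g<0$ on $(0,p_2)$—so the UEE is stable for $\bar\beta^*<\bar\beta<\bar\beta_2$—and $g>0$ just beyond $p_2$. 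If $p_2$ is the unique zero, $g>0$ on all of $(p_2,1)$ and the UEE is unstable for every $\bar\beta>\bar\beta_2$; if $p_3>p_2$ is the next zero, $g>0$ exactly on $(p_2,p_3)$, yielding the unstable window $\bar\beta\in(\bar\beta_2,\bar\beta_3)$ with $\bar\beta_3=\bar\beta(p_3)$. Pushing these sign intervals through the monotone map $\bar\beta(\cdot)$ gives the stated dichotomy.
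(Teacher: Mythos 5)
Your proof shares the paper's skeleton---evaluate \eqref{eq:networked_jac} at the UEE, use regularity and the equilibrium identity to make every block except those involving $A$ and $\hat A$ a scalar multiple of $I$, and reduce stability to an $N\times N$ pencil mixing $\frac{1}{d}A$ and $\frac{p\alpha'(p)}{\hat d\,\alpha(p)}\hat A$---but your core spectral step is genuinely different, and it is more rigorous than the paper's. The paper applies Schur's formula while suppressing the $\lambda$-dependence of the Schur complement, and then treats the spectrum of $J_{eff}$ in \eqref{eq:effective_jacobian} as if it were the $\bar\beta$-dependent part of the spectrum of $J_U$; that identification is exact only at $\lambda=0$ (or when $A$ and $\hat A$ commute), so strictly speaking the paper only locates zero-eigenvalue crossings and never excludes a Hopf-type loss of stability. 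Your quadratic eigenvalue formulation closes both gaps: the commuting-block determinant identity is legitimate because the lower-right block is scalar, positive definiteness of $C_1$ rules out nonzero imaginary-axis eigenvalues, and the Lyapunov function $V=\frac12|\dot x|^2+\frac12 x^{T}C_0x$ yields exponential stability whenever $C_0\succ0$. What the paper's shortcut buys is brevity and the clean formula relating $\lambda_{max}(J_{eff})$ to $g$; what yours buys is an argument that actually controls the full $2N\times 2N$ spectrum when $A$ and $\hat A$ do not commute.

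Two points need attention. First, in step (iii) you argue instability from ``$\det C_0$ changes sign,'' which tacitly assumes the crossing is simple; if an even number of eigenvalues of $C_0$ cross zero together, $\det C_0$ does not change sign. You can avoid this entirely with the tool you already built: if $C_0$ has a negative eigenvalue, take $x(0)=w$ with $w^{T}C_0w<0$ and $\dot x(0)=0$; then $V(t)\le V(0)<0$ for all $t$, so $x(t)\not\to 0$, hence not all eigenvalues lie in the open left half-plane, and combined with step (i) and $\det C_0\neq 0$ at least one eigenvalue lies in the open right half-plane---no simplicity needed. Second, the criterion you actually derive, $\lambda_{max}(B(p^*))<\frac{1}{1-p^*}$, i.e. $-p^*+(1-p^*)\left(\lambda_{max}(B(p^*))-1\right)<0$, is \emph{not} the stated $g(p^*)<0$ unless $\delta=1$: the theorem's $g$ in \eqref{eq:stab_cond} has $-\delta p$ where your computation produces $-p$. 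Your algebra is the correct one---the upper-left block of \eqref{eq:JU} should read $-(\delta+\bar\beta d p^*\alpha(p^*))I+\frac{\delta}{d}A=-\frac{\delta}{1-p^*}I+\frac{\delta}{d}A$, not $-\delta(1+\bar\beta d p^*\alpha(p^*))I+\frac{\delta}{d}A$---so the extra $\delta$ in \eqref{eq:stab_cond} appears to be an error in the paper (invisible in its figures, which use $\delta=1$). But a blind proof of the theorem as stated must flag this mismatch; writing ``i.e. $g(p^*)<0$'' silently equates two inequalities that differ whenever $\delta\neq 1$.
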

\begin{proof}
Existence of the UEE for $\bar{\beta} > \bar{\beta}^*$ was established in Theorem \ref{thm:uniform_EE_lower}, and its local exponential stability at the onset of the transcritical bifurcation follows from Theorem \ref{thm:bif}. From the form of the matrix \eqref{eq:networked_jac} we infer that its eigenvalues are continuous in the parameter $\Bar{\beta}$, which means that the UEE is locally exponentially stable until one or more eigenvalues of $J(\mathbf{p}^*,\mathbf{p}_s^*)$ are singular for some $\bar{\beta} > \bar{\beta}^*$. At $(p^* \mathbf{1}, p^* \mathbf{1})$, \eqref{eq:networked_jac} simplifies to 
\begin{equation}
    { \tiny J_{U}(p^*) = \begin{pmatrix} - \delta \left(1 + \bar\beta d p^* \alpha(p^*) \right) I + \frac{\delta}{d} A  & \delta \frac{\alpha'(p^*)}{\alpha(p^*)} p^* I \\
    \frac{1}{\tau_s} \frac{1}{\hat{d}} \hat{A} & - \frac{1}{\tau_s}  I
    \end{pmatrix} }, \label{eq:JU}
\end{equation}
where we used $\bar{\beta} (1 - p^*) \alpha(p^*)d = \delta$ to eliminate terms.
By Schur's formula 
$\det(J_U(p^*) - \lambda I)  = \det\left(- \left(\frac{1}{\tau_s}+\lambda\right)I\right)\times \det\left(- (\delta \left(1 + \bar\beta d p^* \alpha(p^*) \right)+ \lambda) I + \frac{\delta}{d} A + \frac{\delta}{\hat{d}} \frac{\alpha'(p^*)}{\alpha(p^*)}p^*\hat{A}\right).$ 
Since $p^*$ is an implicit function of $\bar{\beta}$, we see that the eigenvalues of $J_U(p^*)$ that depend on $\bar{\beta}$ are eigenvalues of
\begin{equation}
    J_{eff}(p^*) \hspace{-1mm} =\hspace{-1mm} - \delta \left(1 \hspace{-0.5mm} + \hspace{-0.5mm} \bar\beta d p^* \alpha(p^*) \right) I \hspace{-0.3mm}+ \hspace{-0.3mm}\frac{\delta}{d} A \hspace{-0.3mm}+ \hspace{-0.3mm} \frac{\delta}{\hat{d}} \frac{\alpha'(p^*)}{\alpha(p^*)}p^*\hat{A}. \label{eq:effective_jacobian}
\end{equation}
Each eigenvalue of \eqref{eq:effective_jacobian} takes the form $\lambda_i(J_{eff}(p^*)) = - \delta \left(1 + \bar\beta d p^* \alpha(p^*) \right)  + \delta \lambda_i(M(p^*))$, where $M(p^*) = \frac{1}{d} A + \frac{1}{\hat{d}} \frac{\alpha'(p^*)}{\alpha(p^*)} p^* \hat{A}$. We are interested in the case that the largest eigenvalue $\lambda_{max}(J_{eff}(p^*)) = 0$, as it corresponds to the uniform equilibrium changing stability. At a singular point, $\bar{\beta} = \frac{1}{d p^* \alpha(p^*)}(-1 + \lambda_{max}(M(p^*))) = \frac{\delta}{(1-p^*) d \alpha(p^*)}$,
where the first expression is derived from the zero eigenvalue condition and the second is from the equilibrium condition. Algebraic manipulations of the above statement lead to the condition $g(p) = 0$. Finally, notice that $ \lambda_{max}(J_{eff}(p^*)) = \delta g(p^*)/(1-p^*)$, which means $\operatorname{sign}(\lambda_{max}(J_{eff}(p^*))) = \operatorname{sign}(g(p^*))$. Local stability of the UEE can thus be inferred from the sign of $g(p^*)$ and the theorem follows. 
\end{proof}

According to Theorem \ref{thm:uniform_EE_stab}, the UEE in the network actSIS model can lose stability. We illustrate the relationship between this property and the degree of the communication graph in Fig. \ref{fig:g}. We plot \eqref{eq:stab_cond} for a fixed $A$ and six representative choices of $\hat{A}$. When $\hat{d}$ is large, $g(p)$ remains negative for all $p \in (0,1)$, which means by Theorem \ref{thm:uniform_EE_stab} that the UEE remains stable for all $\bar\beta> \bar\beta^*$. However, when $\hat{A}$ is sparse, $g(p)$ crosses zero and the origin loses stability. This property is relatively unaffected by the degree of the contact graph, as the curves are only slightly perturbed between a sparse and dense choice of contact graph. In Fig. \ref{fig:beta2} we compute the critical value $\bar\beta_2$ across many simulation trials with randomly generated $A$, $\hat{A}$. We observe that for the parameter range within which the UEE loses stability, the average value and variance of $\bar\beta_2$ 
increase with the communication degree $\hat{d}$. The exact properties of these relationships may change based on the choice of risk aversion strategy $\alpha$; we leave quantifying them to future work. 

In Fig. \ref{fig:EEbifurcation_beta} we contrast simulations of \eqref{eq:pdot},\eqref{eq:phat} with $\bar\beta^* < \bar\beta < \bar\beta_2$ (left) and with $\bar\beta > \bar\beta_2$ (right). Once the UEE loses stability, the network settles to a \textit{nonuniform state}. The average level of infection across the network remains close to its level at the UEE; however, different populations settle to different infection levels at steady-state. Interestingly, this heterogeneity emerges despite a high level of regularity in the parameters and network choice of the graph, and is often a direct consequence of the sparsity of communication. Due to this sparsity, some populations overestimate the average risk, and others underestimate it, which in turn translates to a nonuniform infection level across the network. 



    
    

\begin{figure}
    \centering
    \includegraphics[width = 0.48\textwidth]{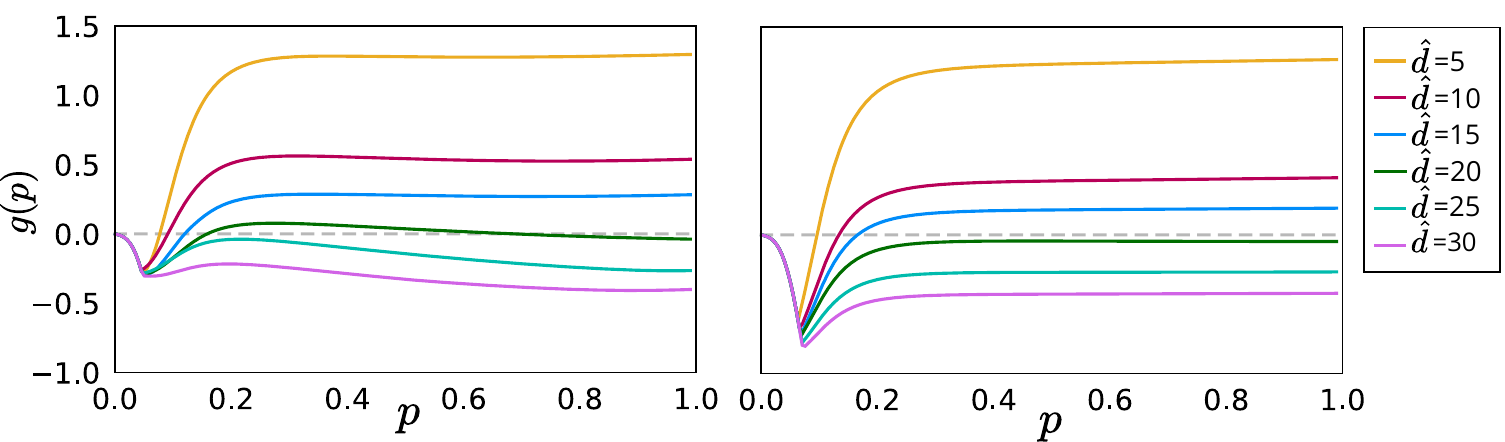}
    \caption{Plot of \eqref{eq:stab_cond} over 40 nodes with fixed sparse contact graph: $d = 5$ (left), and fixed dense contact graph: $d=30$ (right). 
    Each curve in each plot corresponds to an $\hat{A}$ with different $\hat d$. $\hat{A}$ is generated randomly using \cite{hagberg2008exploring}.
    $\delta = 1, \mu = 0.1, \nu=3$.}
    \label{fig:g}
\end{figure}


\begin{figure}
    \centering
    \begin{subfigure}{.24\textwidth}
        \includegraphics[width=1.0\linewidth]{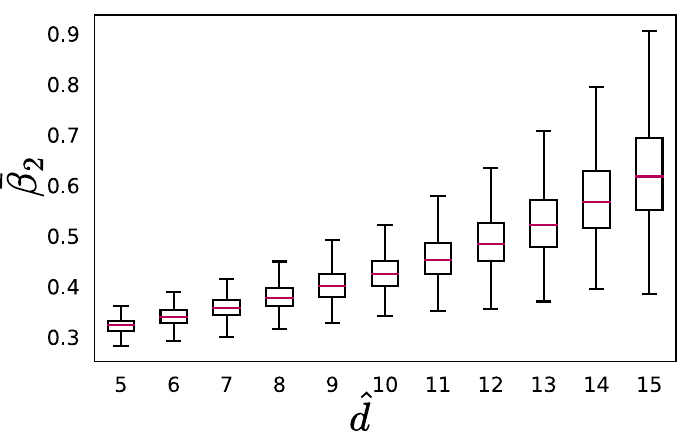}
        \caption{$d = 5$}
    \end{subfigure}%
    \begin{subfigure}{.24\textwidth}
        \includegraphics[width=1.0\linewidth]{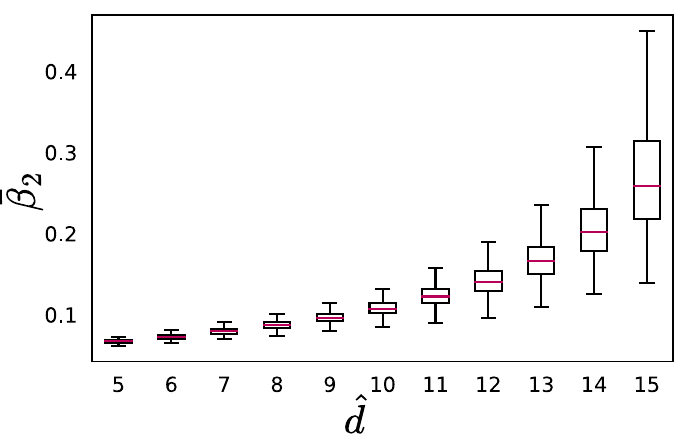}
        \caption{$d = 30$} 
    \end{subfigure}
    \caption{Distribution of the critical point $\bar{\beta}_2$ of Theorem \ref{thm:uniform_EE_stab} for 40 nodes with (a) sparse contact graph; (b) dense contact graph; and communication graphs with range of $d$. 
    Each box plot contains $\bar{\beta}_2$ from 1000 simulations. For each simulation, $A$ and $\hat{A}$ of appropriate $d$, $\hat{d}$ are generated randomly using \cite{hagberg2008exploring}. 
    $\delta = 1, \mu = 0.1, \nu=3$.
    \label{fig:beta2} }
\end{figure}


\begin{figure}
    \centering
    \begin{subfigure}{.24\textwidth}
        \includegraphics[width=1.0\linewidth]{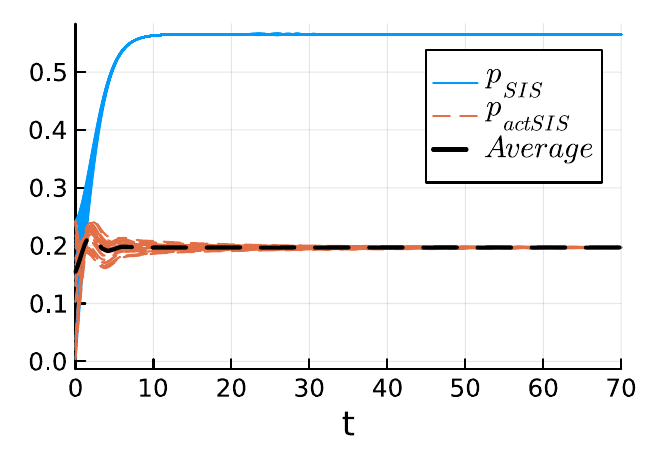}
        \caption{$\bar \beta = 0.23$}
    \end{subfigure}%
    \begin{subfigure}{.24\textwidth}
        \includegraphics[width=1.0\linewidth]{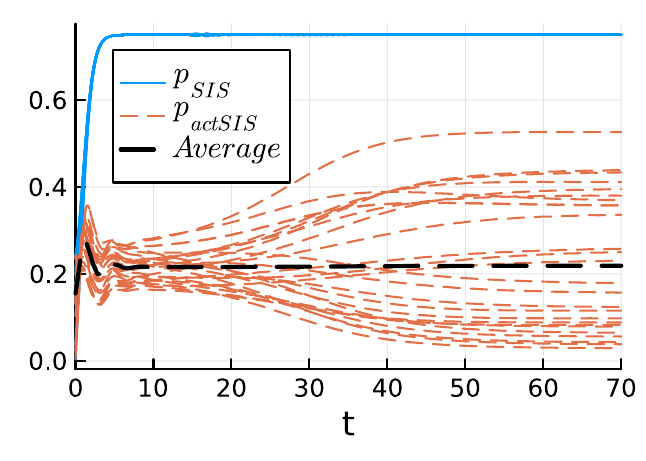}
        \caption{$\bar \beta = 0.4$}
    \end{subfigure}
    \caption{ 
    Trajectories of network SIS and network actSIS dynamics over $26$-node regular graphs with $d = 5$ and $\hat{d} = 21$. $\delta = 0.5, \mu = 0.2, \nu=8$ and $\tau_s=1$. Created using \cite{rackauckas2017differentialequations}.
    \label{fig:EEbifurcation_beta} }
\end{figure}

\section{Numerical explorations and Future Work} 
In Fig. \ref{fig:mixed-strategy} we simulate the network actSIS model with a mixed-strategy network of $10$ agents: $5$ of them are risk-averters and  $5$ are risk-ignorers (i.e. $\alpha_j(p_s) = 1$ for all $p_s$) as shown on the right. The simulation (left) shows that the infection level at the EE of the risk-averters is below the infection level of the risk-ignorers at the EE. This suggests that  even when in contact with risk-ignorers, risk averters are able to maintain relatively low infection levels.
\begin{figure}
    \centering
    \includegraphics[width = 0.48\textwidth]{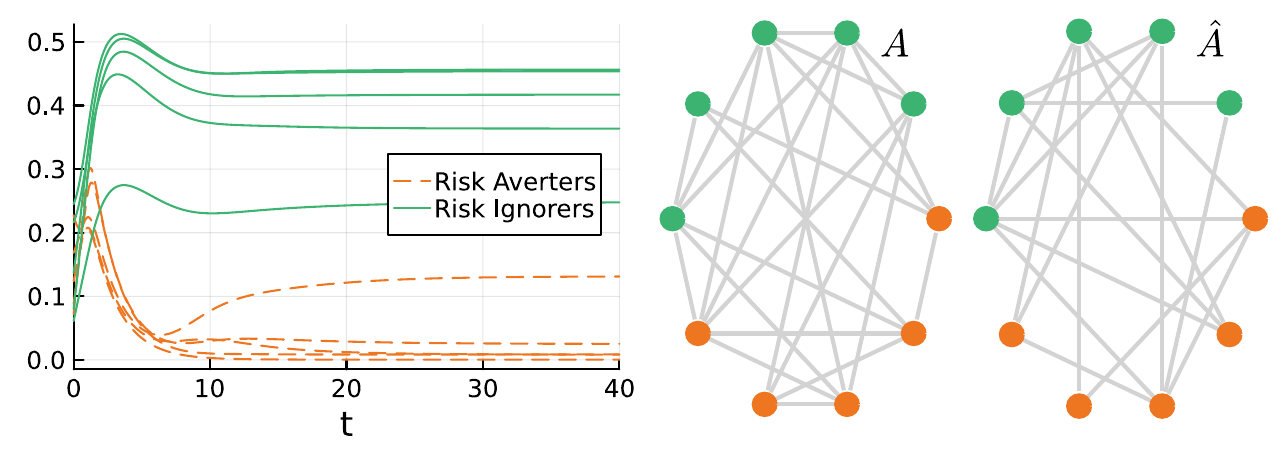}
    \caption{Simulations using \cite{rackauckas2017differentialequations} for a network with $10$ nodes. 
    $\bar\beta = 0.3, \delta = 0.5, \tau_s = 1, \mu = 0.2$, $\nu = 8$. 
    }
    \label{fig:mixed-strategy}
\end{figure}
We also explore the result proved in Theorem \ref{thm:uniform_EE_lower} for general contact networks. Fig. \ref{fig:no_regularity} suggests that, even for irregular contact matrices, the EE is reduced by the risk aversion strategy. We leave proving this 
result to future work. 
Another direction of future work is exploring an alternative hypothesis on risk perception, in which populations adapt their contact rates in response to the perceived rate of change in global infection level $\dot{p}_s$, rather than its net value $p_s$.

\begin{figure}
    \centering
    \includegraphics[width=0.48\textwidth]{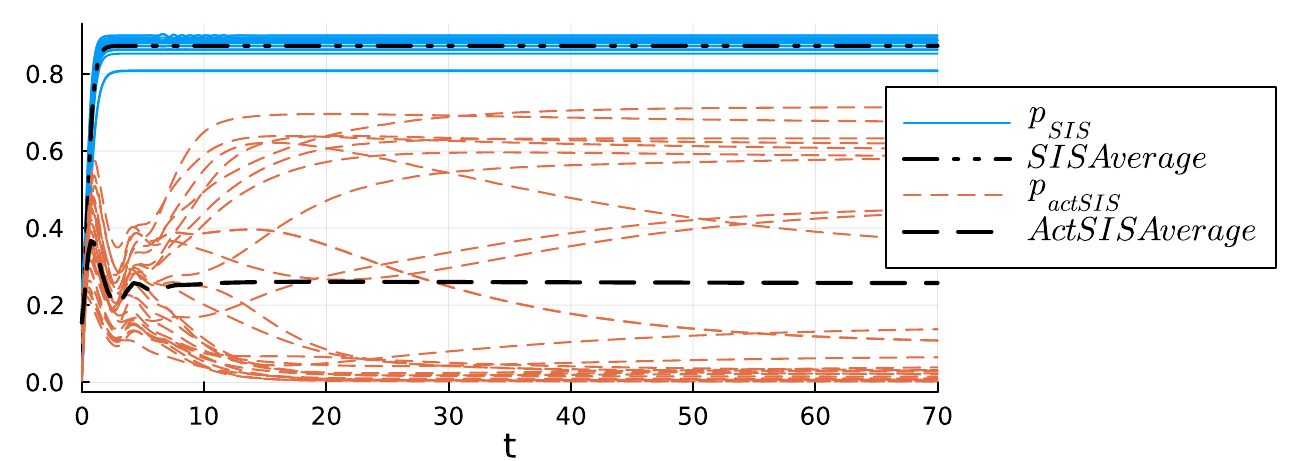}
    \caption{Simulations of \eqref{eq:NetworkSIS} and \eqref{eq:pdot},\eqref{eq:phat} over $26$ nodes. 
    $A$ and $\hat{A}$ are symmetric and generated randomly but not necessarily regular. 
    $\delta = 0.5, \bar\beta = 0.3, \tau_s = 1, \mu = 0.2, \nu = 8$. Created using \cite{rackauckas2017differentialequations}.}
    \label{fig:no_regularity}
\end{figure}



\section{Discussion}
We presented and analyzed the network actSIS model. 
We proved the onset of a transcritical bifurcation in the model. For regular contact and communication network graphs, we showed through proof and numerical simulation that the resulting EE is uniform and has lower infection levels than the EE of the 
network SIS model when populations use a risk aversion strategy. We proved for regular graphs that this equilibrium can become unstable in a bifurcation when the communication graph is sufficiently sparse. Simulations show that at this second bifurcation, an EE with nonuniform levels of infection emerges, despite the homogeneity in the system. Future work will expand these analyses to general graphs, 
and to strategies beyond risk aversion. 

\section*{Appendix}
\subsection*{Lyapunov-Schmidt Reduction for Theorem \ref{thm:bif}}
At the singular point, we compute the leading order coefficients of the Lyapunov-Schmidt reduction, following \cite[Chapter I, \S 3]{Golubitsky1985}: Define $\mathbf{x} = (\mathbf{p},\mathbf{p}_s)$ and let $F(\mathbf{x}, \bar \beta) = \big(f(\mathbf{p}, {\mathbf{p}}_s, \bar \beta), g(\mathbf{p}, {\mathbf{p}}_s, \bar \beta)\big)$, where $f,g$ are the right-hand side of \eqref{eq:pdot},\eqref{eq:phat}. Define the second order partial derivative of $F$ along vectors $\mathbf{y},\mathbf{z}$ and evaluated at some $(\mathbf{x}^*,\bar{\beta}^*)$ as $(d^2F)_{\mathbf{x}^*,\bar{\beta}^*}(\mathbf{y},\mathbf{z}) = \sum_{i=1}^{2N} \sum_{j = 1}^{2N} \frac{\partial^2 F}{\partial x_i \dots \partial x_j}(\mathbf{x}^*, \bar{\beta}^*) y_i z_j$.
A first-order partial directional derivative is defined analogously. Note that at the bifurcation point $(\mathbf{x},\bar\beta^*)$, the right and left eigenvectors of the Jacobian \eqref{eq:JIFE} are $\bar{\mathbf{v}} = (\mathbf{v^*}, \hat{\Delta}^{-1} \hat{A} \mathbf{v}^*)$ and $\bar{\mathbf{w}} = (\mathbf{w}^*, \mathbf{0})$. The quadratic LS coefficient is $h_{xx} = \langle \bar{\mathbf{w}}, d^2F_{0,\bar\beta^*} (\bar{\mathbf{v}}, \bar{\mathbf{v}}) \rangle =\sum_{j=1}^N - w^*_j  \bar\beta^*  v^*_j  \left(\sum_{m=1}^N  \alpha_j(0) a_{jm} v^*_m + \sum_{l=1}^N \alpha_j(0) a_{jl}  v^*_l \right)+ \sum_{j=1}^N w^*_j \sum_{l=1}^N \bar\beta^* a_j'(0) a_{jl} v_l^* (\hat{\Delta}^{-1} \hat{A} \mathbf{v}^*)_j= \bar{\beta}^* \left( \sum_{i = 1}^N w_i^*\left( (\hat{\Delta}^{-1} \hat{A} \mathbf{v}^*)_i (\alpha'(0) A \mathbf{v}^*)_i - 2 v_i^* (\alpha(0) A\mathbf{v}^*)_i \right) \right)$.
Defining $\beta = \bar{\beta} - \bar{\beta}^*$ and $\tilde{A} = \operatorname{diag}\{(\alpha_1(0),\dots,\alpha_N(0)\} A$ gives $h_{x \beta} = \langle \bar{\mathbf{w}}, d \left( \frac{\partial F}{ \partial \beta} \right)_{0,\bar\beta^*} (\bar{\mathbf{v}})\rangle=  \langle \mathbf{w}^*, \tilde{A} \mathbf{v}^* \rangle = \sum_{i = 1}^N w_i (\Tilde{A}\mathbf{v}_i)^*$, so $h_{x\beta}>0$.
\bibliographystyle{ieeetr}
\bibliography{references_ab}

\end{document}